\documentclass{article}

\usepackage[preprint]{neurips_2026}
\usepackage{makecell}

\usepackage[utf8]{inputenc}
\usepackage[T1]{fontenc}
\usepackage{hyperref}
\usepackage{url}
\usepackage{booktabs}
\usepackage{amsfonts}
\usepackage{amsmath,amssymb,amsthm}
\usepackage{nicefrac}
\usepackage{microtype}
\usepackage{xcolor}
\usepackage{graphicx}

\newtheorem{theorem}{Theorem}
\newtheorem{lemma}[theorem]{Lemma}

\title{Exact Fixed-Point Constraints in Neural-ODEs with Provable Universality}

\author{%
Feliciano Giuseppe Pacifico\textsuperscript{1,2} \\
\texttt{felicianogiuseppe.pacifico@phd.unipi.it}\\
\And
Duccio Fanelli\textsuperscript{2} \\
\And
Lorenzo Buffoni\textsuperscript{2} \\
\And
Lorenzo Chicchi\textsuperscript{2} \\
\And
Diego Febbe\textsuperscript{2} \\
\And
Raffaele Marino\textsuperscript{2} \\
}

\begin{document}

\maketitle

\vspace{-10pt}
\textsuperscript{1}Department of Informatics and Computer Science, University of Pisa, Italy \\
\textsuperscript{2}Department of Physics and Astronomy and INFN, University of Florence, Sesto Fiorentino, Italy\\

\begin{abstract}
We introduce a technique that enables Neural-ODEs to approximate arbitrary velocity fields with a priori planted fixed-points. Specifically, a recipe is given to explicitly accommodate for a finite collection of points in the reference multi-dimensional space of the Neural-ODE where the velocity field is exactly equal to zero. In this way, the gradient-based training is rigorously constrained inside the prescribed hypothesis class while leaving the expressive power of the Neural-ODE unaltered. We rigorously prove the universality of the Neural-ODE under any local constraints in the velocity field and give a computationally convenient way of imposing the fixed points. Our method is then tested on two paradigmatic physical models.

\end{abstract}

\section{Introduction}

Neural ordinary differential equations (Neural-ODEs) \cite{chen2018neuralode} define a  class of deep learning models that seek at representing neural network layers as a continuous-time system. Instead of a discrete sequence of hidden layers, a Neural-ODE parameterizes the derivative of the state vector via a neural network. Concretely, the forward pass materializes as the evolution of a point in a multi-dimensional space through a system of coupled ordinary differential equations, written in terms of a deep neural network which bears universality traits \cite{teshima2020universal, ishikawa2023universal}. \\

Hard-constrained neural architectures have been investigated from several complementary viewpoints. Classical contributions addressed exact interpolation and constrained approximation in feed-forward networks \cite{min2024hardnet,CAO2010457}, while seminal and recent works in the PINN literature exploited constraint-preserving ansatz, distance functions, and hard linear equality constraints to impose boundary or algebraic conditions exactly \cite{lagaris1998artificial,chen2024physics, sukumar2022exact}. Non-trivial hard constraints can be also enforced without losing universal approximation guarantees \cite{constantinescu2023approximation}. For continuous-depth models, related ideas have been explored in several directions, including manifold-constrained Neural-ODEs with universality results \cite{elamvazhuthi2023learning} and stability-oriented Neural-ODE constructions based on Lyapunov-stable equilibria \cite{kang2021stable}. Exact feasibility has been also enforced via differentiable optimization layers \cite{amos2017optnet}. More recently, dedicated methods have been proposed to constrain the learned dynamics \cite{GaglianiPacificoChicchiFanelliFebbeBuffoniMarino2026, Chicchi_2025} via spectral decomposition and orthogonal projection operators. Other approaches assume projecting the vector field onto the tangent space of a constraint manifold \cite{white2024projected}. These results imply that structural priors and exact constraints can be embedded into trainable architectures in a principled way. 

Yet, the specific issue of prescribing beforehand a finite collection of exact fixed points in a Neural-ODE velocity field, while retaining universality within the class of compatible target dynamics, is still an open question that remains unanswered despite the above mentioned constructions. Planted-fixed-point could on the one side contribute to enhance the model interpretability, as learning takes place within a limpid dynamical framework; on the other hand, the resulting architecture might, at least in principle, lose expressive power precisely because of the imposed equilibrium constraints.

The goal of this work is to bridge the above viewpoints by using a Neural-ODE which accommodates beforehand for a finite set of planted fixed points, via a suitably tailored parameterization,
while keeping the remaining parameters free to fit any arbitrary compatible dynamics, thus retaining universality.  The paper is organized as follows. Section \ref{sec:2} defines the problem, while Section \ref{theorem} presents the proof of the main theorem establishing the universality of the proposed approach. In Section  \ref{sec:planting_node} we will turn to discussing the generalized procedure to enforce the fixed-points in the generated velocity fields. Vector field regression for models with planted fixed-points will be demonstrated in Section  \ref{sec:exp_vectorfield_regression}. 

\section{Stating the problem}\label{sec:2}

Consider a Neural-ODE model that we generally formulate as:

\begin{equation}
\dot x=A f(Bx+b) \equiv F_\theta(x)
\end{equation}

where $A\in\mathbb R^{n\times m}$, $B\in\mathbb R^{m\times n}$, $b\in\mathbb R^m$, $f(\cdot)$ is a proper non-linear function. The multi-layer perceptron that defines the velocity field of the above Neural-ODE is denoted by $F_\theta(x)$ in short, where $\theta$ stands for the cumulative set of free tunable parameters. Imagine that a procedure exists to insert in the above vector field 
$F_\theta(x)$ a set of $C$ fixed-point constraints, namely:

\begin{equation}
\label{eq:fp_constraints_exp}
F_\theta(\bar x^{(l)}) = 0,\qquad l=1,\dots,C
\end{equation}

The specific procedure that we have devised to enforce these constraints, will be illustrated in Section \ref{sec:planting_node}. 

Given the above, a natural question arises: does imposing hard equilibrium constraints reduce the expressive power of the Neural-ODE vector field? This is the question that we set to answer in the forthcoming Section. More specifically, we will prove that, on compact sets, the family of vector fields depictable by a one-hidden-layer Neural-ODE remains a universal approximator, even when restricted to the subclass satisfying the prescribed constraints. Equivalently, the constrained parametrizations do not merely enforce a local condition on the vector field: they do so without sacrificing universality within the class of target dynamics that share the same constraints.

\section{Universality under local constraints}
\label{theorem}

Unlike other literature dealing with hard-constrained architecture \cite{min2024hardnet,lagaris1998artificial,CAO2010457,chen2024physics,constantinescu2023approximation, sukumar2022exact}, in this work, the hard constraints are used to prescribe fixed points of the Neural-ODE vector field. The theorem below shows that these exact equilibrium constraints do not destroy the approximation power of the model class: the constrained family remains dense in the class of continuous vector fields vanishing at the prescribed points.  We are not aware of a previous proof of this exact statement in the present setting and based on the constructive correction argument used here.

\begin{theorem}[Universal approximation with prescribed fixed points via kernel constraints]
Let \(K\subset\mathbb R^n\) be compact and let \(\bar x^{(1)},\dots,\bar x^{(C)}\in K\) be distinct. Let $\sigma:\mathbb R\to\mathbb R$ be a continuous activation with two distinct one-sided limits
\[
\alpha:=\lim_{t\to -\infty}\sigma(t),\qquad \beta:=\lim_{t\to +\infty}\sigma(t),
\qquad \alpha\neq \beta.
\]
and define \(f:\mathbb R^m\to\mathbb R^m\) component-wise by \((f(z))_j=\sigma(z_j)\).

Consider the network class (vector fields)
\[
\mathcal N_m:=\Big\{\,F_\theta(x)=A f(Bx+b)\ :\ A\in\mathbb R^{n\times m},\ B\in\mathbb R^{m\times n},\ b\in\mathbb R^m\,\Big\}.
\]

Define the constrained target class
\[
\mathcal F_0 := \Big\{F\in C(K,\mathbb R^n): F(\bar x^{(l)})=0 \ \forall l=1,\dots,C\Big\}.
\]

Then, for every \(F\in\mathcal F_0\) and every \(\varepsilon>0\), there exists an integer \(m\) and parameters \((A,B,b)\) such that
\begin{enumerate}
\item \(F_\theta(\bar x^{(l)})=0\) for all \(l=1,\dots,C\) (exact fixed points),
\item \(\sup_{x\in K}\|F(x)-F_\theta(x)\|<\varepsilon\).
\end{enumerate}

Equivalently: the constrained network class
\[
\mathcal N_{m,0}:=\{F_\theta\in \mathcal N_m:\ F_\theta(\bar x^{(l)})=0 \ \forall l\}
\]
is dense in \(\mathcal F_0\) in the uniform norm.
\end{theorem}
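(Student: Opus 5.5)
Our plan is to first approximate \(F\) by an unconstrained network, and then add a small corrective network that restores the exact zeros at the prescribed points without spoiling the uniform error.

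\textbf{Step 1 (unconstrained approximation).} Since \(\sigma\) is continuous, bounded and non-polynomial (it has two distinct finite limits at \(\pm\infty\)), the classical universal approximation theorem for one-hidden-layer networks applies (Cybenko/Hornik/Leshno--Lin--Pinkus--Schocken). We apply it componentwise and stack the hidden units. This gives, for any \(\delta>0\), a network \(G(x)=A_1 f(B_1x+b_1)\in\mathcal N_{m_1}\) with
\[
\sup_{x\in K}\|F(x)-G(x)\|<\delta .
\]
Because \(F(\bar x^{(l)})=0\), the residuals \(r_l:=G(\bar x^{(l)})\) satisfy \(\|r_l\|<\delta\) for every \(l\).

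\textbf{Step 2 (interpolating correction with bounded size).} We build a fixed network \(H(x)=A_2 f(B_2x+b_2)\) whose values at the \(C\) points are linearly independent functionals. Concretely, we want a matrix \(\Phi\in\mathbb R^{C\times p}\), \(\Phi_{lj}=\sigma(w_j^\top\bar x^{(l)}+c_j)\), of rank \(C\), with the \(p\) hidden units chosen once and independently of \(\delta\). Rank \(C\) is the standard exact-interpolation property of one-hidden-layer nets with non-polynomial \(\sigma\). For a direct construction:
\begin{itemize}
\item choose a direction \(w\) separating the points, so that \(t_l=w^\top\bar x^{(l)}\) are distinct, and order them;
\item use steep units \(\sigma(\lambda(t-s_k))\) with thresholds \(s_k\) placed between consecutive \(t_l\), plus one unit saturated at \(\beta\) on all points;
\item as \(\lambda\to\infty\), the matrix tends to \(\alpha\mathbf 1\mathbf 1^\top+(\beta-\alpha)\,\mathrm{(lower\ triangular\ ones)}\), which is invertible since \(\alpha\neq\beta\).
\end{itemize}
By continuity of the determinant, the matrix is already invertible for finite large \(\lambda\). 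This is where the two-distinct-limits hypothesis is used.

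\textbf{Step 3 (fixing the output weights).} We set the output weights \(A_2\in\mathbb R^{n\times p}\) to solve \(A_2\Phi^\top=-R\), where \(R=[r_1,\dots,r_C]\in\mathbb R^{n\times C}\), taking the minimum-norm solution. Then \(\|A_2\|\le \|\Phi^{+}\|\,\|R\|\le c_\Phi\sqrt C\,\delta\), with \(c_\Phi\) depending only on the fixed hidden layer. Since \(\sigma\) is bounded by some \(M\), we get
\[
\sup_K\|H\|\le \sqrt p\,M\,c_\Phi\sqrt C\,\delta .
\]

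\textbf{Step 4 (combining).} We concatenate the hidden layers, which shows \(\mathcal N_{m_1}+\mathcal N_{p}\subset\mathcal N_{m_1+p}\), and set \(F_\theta=G+H\). Then \(F_\theta(\bar x^{(l)})=r_l-r_l=0\) exactly, and
\[
\sup_K\|F-F_\theta\|\le \delta\,(1+\sqrt{pC}\,M c_\Phi).
\]
Choosing \(\delta\) small makes this less than \(\varepsilon\).

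The main obstacle is Step 2. The correction must be built \emph{before} \(\delta\) is chosen, so that its conditioning constant \(c_\Phi\) does not depend on the approximation accuracy. We would fix \(\lambda\) first via the limiting triangular matrix and only then invoke Step 1.
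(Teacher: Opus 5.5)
Your proposal follows essentially the paper's route. You approximate $F$ by an unconstrained network $G$, append correction units built from steep neurons along a separating direction (the same staircase limit $M_\star$ as in the appendix lemma), and solve a linear system for output weights that cancel the residuals at the planted points. You also fix that correction layer before choosing $\delta$. Using $p\ge C$ units with the minimum-norm solution, and stating explicitly that $c_\Phi$ is independent of $\delta$, are harmless refinements of the paper's argument.

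There is one slip in Step 2. The limit matrix $\alpha\mathbf 1\mathbf 1^\top+(\beta-\alpha)L$ has determinant $\beta(\beta-\alpha)^{C-1}$, so $\alpha\neq\beta$ alone does not make it invertible. When $\beta=0$ the unit saturated at $\beta$ on all points contributes a zero column and the limit is singular. An admissible example is $\sigma(t)=1/(1+e^{t})$, which has $\alpha=1$ and $\beta=0$.

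The paper's lemma handles this case by exchanging the roles of $\alpha$ and $\beta$. In your construction, when $\beta=0$, use instead a unit saturated at $\alpha\neq 0$ on all points, with its threshold above $t_C$. Your appeal to the standard exact-interpolation property for non-polynomial $\sigma$ also covers this case.
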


\begin{proof}
\textbf{Step 0 (Fixed points are hidden vectors in the kernel of $A$).}
For any \((A,B,b)\) and any \(x\in K\),
\[
F_\theta(x)=A f(Bx+b).
\]
Thus for each planted point \(\bar x^{(l)}\),
\[
F_\theta(\bar x^{(l)})= 0
\quad\Longleftrightarrow\quad
A\,s_l = 0,
\quad s_l:=f(B\bar x^{(l)}+b)\in\mathbb R^m.
\]
So the fixed-point constraints are equivalent to requiring
\[
s_l\in\ker(A)\quad \forall l.
\]
Notice that no orthogonality is needed for the above conclusion to hold. The practical procedure to enforce $s_l$ in the kernel of $A$ will be addressed in Section \ref{sec:planting_node}.

\begin{figure}[t]
    \centering
    \includegraphics[width=0.55\linewidth]{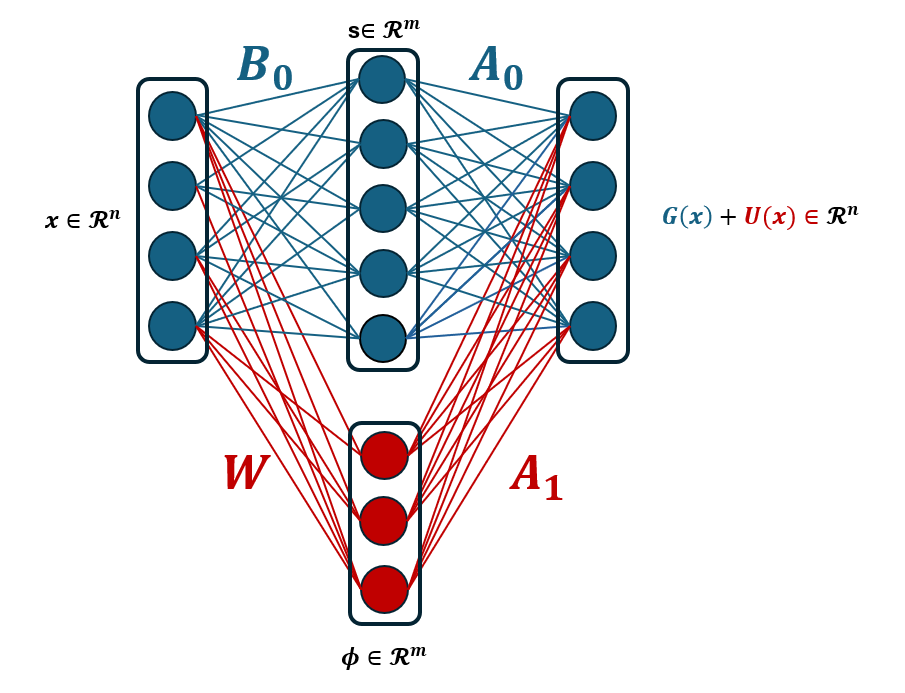}
    \caption{A schematic layout of the architectural construction employed in the proof of the Theorem.}
    \label{scheme}
\end{figure}

\medskip
\textbf{Step 1 (Unconstrained approximation).}
Let \(F\in\mathcal F_0\) and \(\varepsilon>0\) be given.

By the canonical universal approximation theorem \cite{HORNIK1989359,cybenko1989approximation} for one-hidden-layer networks (vector-valued version obtained component-wise), there exist \(m_0\) and \((A_0,B_0,b_0)\) such that
\[
G(x):=A_0 f(B_0x+b_0)
\]
satisfies
\[
\sup_{x\in K}\|F(x)-G(x)\| < \delta,
\]
where \(\delta>0\) will be chosen later.

Since \(F(\bar x^{(l)})=0\) for all \(l\), we get the pointwise bound
\[
\|G(\bar x^{(l)})\|
=\|G(\bar x^{(l)})-F(\bar x^{(l)})\|
<\delta
\qquad \forall l.
\]
Define the residual vectors
\[
r_l :=  -G(\bar x^{(l)})\in\mathbb R^n,
\qquad
R:=[r_1,\dots,r_C]\in\mathbb R^{n\times C}.
\]
Then \(\|r_l\|<\delta\) for all \(l\).

\medskip
\textbf{Step 2 (Choose \(C\) extra hidden units with an invertible evaluation matrix).}
We will add \(C\) extra hidden nodes, going from $m_0$ to $m_0+C$ elements. For \(k=1,\dots,C\), choose parameters \((w_k,\beta_k)\in\mathbb R^n\times\mathbb R\) and the define scalar features (see schematic layout \ref{scheme})
\[
\phi_k(x):=\sigma(w_k^\top x+\beta_k).
\]
Let \(\phi(x)=(\phi_1(x),\dots,\phi_C(x))^\top\in\mathbb R^C\) and define the evaluation matrix \(M\in\mathbb R^{C\times C}\) by
\[
M_{lk}:=\phi_k(\bar x^{(l)}).
\]

As proven in Appendix \ref{M_invertibility}, it exists a choice of \(\{(w_k,\beta_k)\}_{k=1}^C\) such that \(M\) is invertible.

\medskip
\textbf{Step 3 (Construct a corrected interpolant network that enforces the sought constraints exactly).}
Define an \(n\times C\) output weight matrix \(A_1\) as the unique solution of
\[
A_1 M^\top = R.
\]
Since \(M\) is invertible (see Appendix \ref{M_invertibility}), this solution exists and is
\[
A_1 = R (M^\top)^{-1}.
\]

Define the correction $U(x)$ to the output component (that is needed to enforce the imposed constraints) as
\[
U(x):=A_1\,\phi(x)\in\mathbb R^n.
\]
Now evaluate at the planted points \(\bar x^{(l)}\). Note that \(\phi(\bar x^{(l)})=M_{l,:}^\top\). Hence
\[
U(\bar x^{(l)})
= A_1\,\phi(\bar x^{(l)})
= A_1\,M_{l,:}^\top
= (A_1 M^\top)\,e_l
= R\,e_l
= r_l
= -G(\bar x^{(l)}).
\]
where $e_l$ stands for the $l$-th element of the canonical basis. Therefore, the corrected output function reads
\[
H(x):=G(x)+U(x)
\]
and satisfies
\[
H(\bar x^{(l)})=G(\bar x^{(l)})+U(\bar x^{(l)})=0
\qquad \forall l.
\]
So \(H\) satisfies the fixed-point constraints exactly.

Also note that \(H\) is still a one-hidden-layer network: it is the sum of two one-hidden-layer networks, which is again a one-hidden-layer network after concatenating hidden units. Concretely, define $m=m_0+C$, where $C$ is the total number of imposed constraints and set 
\[
B:=\begin{bmatrix}B_0\\ W\end{bmatrix},\quad
b:=\begin{bmatrix}b_0\\ \beta\end{bmatrix},\quad
A:=\begin{bmatrix}A_0 & A_1\end{bmatrix},
\]
where \(W\) has rows \(w_k^\top\) and \(\beta=(\beta_1,\dots,\beta_C)^\top\). Then indeed
\[
H(x)=A f(Bx+b).
\]
Thus \(H\in \mathcal N_m\) and \(H(\bar x^{(l)})=0\ \forall l\), i.e. \(H\in\mathcal N_{m,0}\).

Notice that, by Step 0, for this constructed \(A,B,b\) we automatically have
\[
s_l=f(B\bar x^{(l)}+b)\in\ker(A),
\]

\medskip
\textbf{Step 4 (Uniform error control).}
We estimate the approximation error:
\[
\sup_{x\in K}\|F(x)-H(x)\|
\le
\sup_{x\in K}\|F(x)-G(x)\|
+
\sup_{x\in K}\|U(x)\|.
\]
The first term is \(<\delta\) by Step 1.

For the second term, by the hypothesis on \(\sigma\), it is bounded on \(\mathbb R\). This holds for key non linear functions such as, \(\tanh\) and  sigmoid. Then there exists a constant \(M_\sigma\) such that \(|\sigma(t)|\le M_\sigma\) for all \(t\). Hence
\[
\|\phi(x)\| \le \sqrt{C}\,M_\sigma \qquad \forall x\in K.
\]
where, we recall, $C$ stands for the number of imposed constraints. Therefore
\[
\sup_{x\in K}\|U(x)\|
=
\sup_{x\in K}\|A_1\phi(x)\|
\le
\|A_1\|\ \sup_{x\in K}\|\phi(x)\|
\le
\|A_1\|\ \sqrt{C}\,M_\sigma.
\]
Using \(A_1=R(M^\top)^{-1}\),
\[
\|A_1\|\le \|R\|\ \|(M^\top)^{-1}\|.
\]
Moreover, since each column of \(R\) is \(r_l\) with \(\|r_l\|<\delta\), we have \(\|R\|\le \sqrt{C}\,\delta\) (e.g. by Frobenius norm bound). Thus
\[
\sup_{x\in K}\|U(x)\|
\le
(\sqrt{C}\,\delta)\ \|(M^\top)^{-1}\|\ (\sqrt{C}\,M_\sigma)
=
C\,M_\sigma\,\|(M^\top)^{-1}\|\ \delta.
\]

Putting the two terms together:
\[
\sup_{x\in K}\|F(x)-H(x)\|
<
\delta + C\,M_\sigma\,\|(M^\top)^{-1}\|\ \delta
=
\delta\Big(1 + C\,M_\sigma\,\|(M^\top)^{-1}\|\Big).
\]
Choose
\[
\delta := \frac{\varepsilon}{1 + C\,M_\sigma\,\|(M^\top)^{-1}\|}.
\]
Then \(\sup_{x\in K}\|F(x)-H(x)\|<\varepsilon\).

Thus \(H\in\mathcal N_{m,0}\) approximates \(F\in\mathcal F_0\) arbitrarily well while matching the fixed points exactly.
\end{proof}

After proving that the fixed-point constraints $F_\theta(\bar x^{(l)}) = 0$, $l=1,\dots,C $ do not hinder universality, we note that a trivial corollary of the abovementioned proof is the fact that any set of local constraints of the form 
\begin{equation}
F_\theta(\bar x^{(l)}) = c^{(l)},\qquad l=1,\dots,C
\end{equation}
still obeys the same universality theorem by generalizing the residuals to $r_l=-G(\bar{x}^{(l)})+c^{(l)}$.

\section{Planting fixed-points in Neural-ODEs}
\label{sec:planting_node}

The universality theorem proved above is an \emph{existence} result. Stated differently, it guarantees that the constrained class
\(
\mathcal N_{m,0}=\{F_\theta\in\mathcal N_m:\ F_\theta(\bar x^{(l)})=0,\ l=1,\dots,C\}
\)
is dense in the class of compatible vector fields.
In this Section, we provide an operative perspective, by elaborating on a  \emph{constructive and training-friendly} recipe to impose the fixed-point 
\(
F_\theta(\bar x^{(l)})=0
\)
within a Neural-ODE, so that gradient-based training cannot drift outside the constrained class (namely the class that accommodates for the fixed points constraints).

\subsection{Neural-ODE parametrization}
\label{subsec:node_setup}

We hereafter consider a slightly modified version of the Neural-ODE velocity field as discussed above. More specifically, we set:
\begin{equation}
\dot x(t) = F_\theta(x(t)), 
\qquad
F_\theta(x) := -x + A_1 f(A_2 x + b_2) + b_1,
\label{eq:node_model}
\end{equation}
where $x(t)\in\mathbb R^n$, $A_2\in\mathbb R^{m\times n}$, $b_2\in\mathbb R^m$,
$A_1\in\mathbb R^{n\times m}$, $b_1\in\mathbb R^n$, and $f:\mathbb R^m\to\mathbb R^m$
is applied component-wise (e.g.\ $f(z)_j=\sigma(z_j)$). 

Although Theorem 1 is stated for vector fields of the form $Af(Bx+b)$, the same construction applies to the residual parametrization
\[
F_\theta(x)=-x+A_1f(A_2x+b_2)+b_1.
\]
Indeed, for a target field $F$ with prescribed equilibria $F(\bar x^{(l)})=0$, it is enough to consider the shifted field
\[
\widetilde F(x)=F(x)+x-b_1.
\]
 Hence the universality result carries over to the residual model used in the numerical experiments. We used this modified model as it can be mapped, in some suitable limit, to a coupled multi-population dynamical system, in the spirit of the Funahashi--Nakamura construction \cite{FUNAHASHI1993801}. As we plan to use this constrained Neural-ODE model in future works to investigate some biologically inspired models rooted in neuroscience, we opted to investigate this slight modification without loss of generality. 

Notice that in the right-hand side of (\ref{eq:node_model}) we have subtracted a linear term ($-x$) without loss of generality.  For the sake of completeness, we now apply biases on both layers, via $b_1$ and $b_2$. Given $C$ desired equilibria, $\{\bar x^{(l)}\}_{l=1}^C\subset\mathbb R^n$,
the fixed-point constraints are
\begin{equation}
F_\theta(\bar x^{(l)})=0,
\qquad l=1,\dots,C.
\label{eq:fp_constraints}
\end{equation}

Evaluating \eqref{eq:node_model} at $\bar x^{(l)}$ gives
\[
0
=
-\bar x^{(l)} + A_1 f(A_2 \bar x^{(l)} + b_2) + b_1
\quad\Longleftrightarrow\quad
A_1 s_l = y_l,
\]
where we define the \emph{feature vectors} and \emph{targets} as, respectively:
\begin{equation}
s_l := f(A_2 \bar x^{(l)} + b_2)\in\mathbb R^m,
\qquad
y_l := \bar x^{(l)} - b_1\in\mathbb R^n.
\label{eq:def_s_y}
\end{equation}
Stack these quantities into matrices
\begin{equation}
S := [s_1,\dots,s_C]\in\mathbb R^{m\times C},
\qquad
Y := [y_1,\dots,y_C]\in\mathbb R^{n\times C}.
\label{eq:def_S_Y}
\end{equation}
Then the $C$ fixed-point constraints \eqref{eq:fp_constraints} are equivalent to the single matrix equation
\begin{equation}
A_1 S = Y.
\label{eq:A1S_eq_Y}
\end{equation}
As a key observation, for fixed $(A_2,b_2,b_1)$, the constraint \eqref{eq:A1S_eq_Y} is \emph{linear} in $A_1$.

\subsection{Pseudoinverse-based planting}
\label{subsec:pinv_planting}

Assume that the planted features are linearly independent, i.e.
\begin{equation}
\mathrm{rank}(S)=C,
\qquad \text{in particular } m > C.
\label{eq:rank_assumption}
\end{equation}
Under \eqref{eq:rank_assumption}, the linear system \eqref{eq:A1S_eq_Y} is solvable and has infinitely many solutions.
A canonical choice is the minimum-norm solution obtained via the Moore--Penrose pseudoinverse $S^+$:
\begin{equation}
A_{\mathrm{part}} := Y S^+ \qquad \Longrightarrow \qquad A_{\mathrm{part}} S = Y.
\label{eq:A_part_pinv}
\end{equation}
To preserve full freedom (for training purposes) while adjusting for the sought constraints, we proceed as follows.  Let
\begin{equation}
P := SS^+ \in \mathbb R^{m\times m},
\label{eq:def_P_pinv}
\end{equation}
which is the (orthogonal) projector onto $\mathrm{span}(S)$.
Since each column of $S$ lies in $\mathrm{span}(S)$, we have
\begin{equation}
(I-P)S=0.
\label{eq:IminusP_S_zero}
\end{equation}
Therefore, for any free matrix $W\in\mathbb R^{n\times m}$ we can define
\begin{equation}
A_1 := A_{\mathrm{part}} + W(I-P),
\label{eq:A1_pinv_param}
\end{equation}
and automatically obtain
\[
A_1 S
=
A_{\mathrm{part}}S + W(I-P)S
=
Y + W\cdot 0
=
Y.
\]
Hence \eqref{eq:A1S_eq_Y} holds exactly, and consequently $F_\theta(\bar x^{(l)})=0$ for all $l$.
Importantly, the constraints are satisfied \emph{for all values of the free parameters} $W$. Hence, and as already remarked,  
training cannot drift outside the constrained class.

\medskip
\noindent\textbf{Remark (computational issue).}
While \eqref{eq:A1_pinv_param} is conceptually clean, computing $S^+$ via an SVD-based pseudoinverse can be
computationally expensive and hide numerical pitfalls when $S$ is ill-conditioned (especially as $C$ grows).
Next, we show how to obtain the \emph{same} constrained matrix $A_1$ via a QR factorization, which replaces the
pseudoinverses by stable triangular solves.

\subsection{QR-based planting}
\label{subsec:qr_planting}

Assume again \eqref{eq:rank_assumption}. Consider a thin QR decomposition \cite{GolubVanLoan2013, BenIsraelGreville2003} of the feature matrix:
\begin{equation}
S = QR,
\label{eq:qr_S}
\end{equation}
where $Q\in\mathbb R^{m\times C}$ has orthonormal columns ($Q^\top Q=I_C$) and
$R\in\mathbb R^{C\times C}$ is upper triangular and invertible. 
Since $\mathrm{span}(S)=\mathrm{span}(Q)$, the orthogonal projector onto $\mathrm{span}(S)$ is
\begin{equation}
P = QQ^\top,
\label{eq:P_QQt}
\end{equation}
and again $(I-P)S=0$ because each column of $S$ lies in $\mathrm{span}(Q)$.

We have to find now a particular solution via QR.
We want $A_{\mathrm{part}}S=Y$. Using $S=QR$, this is
\[
A_{\mathrm{part}}QR = Y.
\]
Define $B:=A_{\mathrm{part}}Q\in\mathbb R^{n\times C}$. Then the equation becomes
\begin{equation}
BR = Y.
\label{eq:BR_eq_Y}
\end{equation}
Because $R$ is upper triangular and invertible, \eqref{eq:BR_eq_Y} has the unique solution $B=YR^{-1}$, hence
\begin{equation}
A_{\mathrm{part}} = BQ^\top = (YR^{-1})Q^\top.
\label{eq:A_part_qr}
\end{equation}
In the numerical implementations it is not needed to form $R^{-1}$ explicitly: one solves the triangular system
$BR=Y$ by back-substitution (a stable $O(C^2)$ operation per output dimension).

By adding now the free nullspace component, as in the pseudoinverse case, we define the full family of solutions by adding a term that annihilates $S$:
\begin{equation}
A_1 := A_{\mathrm{part}} + W(I-P)
=
(YR^{-1}Q^\top) + W(I-QQ^\top),
\qquad W\in\mathbb R^{n\times m}\ \text{free}.
\label{eq:A1_qr_param}
\end{equation}
Then
\[
A_1 S
=
A_{\mathrm{part}}S + W(I-P)S
=
Y + W\cdot 0
=
Y,
\]
so the constraints \eqref{eq:A1S_eq_Y} (equivalently \eqref{eq:fp_constraints}) hold exactly.

\medskip
In conclusion, the pseudoinverse and QR constructions enforce the \emph{same} fixed-point constraints. The QR version should be preferred in practice because it avoids SVD-based pseudoinverses and replaces
matrix inversion by triangular solves on $R$, which are typically more stable and faster \cite{GolubVanLoan2013,BenIsraelGreville2003}.
Moreover, the decomposition
\(
A_1 = A_{\mathrm{part}} + W(I-P)
\)
makes explicit that the constraints fix only the action of $A_1$ on $\mathrm{span}(S)$, while leaving the
orthogonal complement free for learning task-specific dynamics; hence, throughout training, the model is updated within the constrained class, while retaining substantial expressive freedom. If $\mathrm{rank}(S)<C$ at some iterate, the constraints \eqref{eq:A1S_eq_Y} may become incompatible (or redundant).
In practice, one can (i) choose $m\ge C$, (ii) initialize so that $S$ has full column rank, and (iii) add a mild
conditioning regularizer that discourages collapse of the columns of $S$.

In the following Section, we will turn to perform a set of numerical experiments that are aimed at testing the ability of the scheme to reproduce the velocity field of paradigmatic dynamical systems, with (stable or unstable) fixed points. These tests constitute the first validation of the proposed model.

\section{Numerical experiments: vector-field regression with planted fixed points}
\label{sec:exp_vectorfield_regression}

This section empirically reports the ability of the proposed scheme to handle regression of a target velocity field which displays a collection of (stable or unstable) fixed points. In turn, this is aimed at providing a (non exhaustive) numerical validation, for the proof of universality as reported in  
Section~\ref{sec:2}. More concretely, we will consider supervised \emph{vector-field regression} on a compact domain: given a target velocity field $F_\star:K\to\mathbb R^2$ and pointwise samples
$\{(x_i, F_\star(x_i))\}$, we shall train a one-hidden-layer Neural-ODE vector field
$F_\theta$ to approximate $F_\star$ while 
exactly accommodating for the fixed point gathered from the target velocity field $F_\star$. The fixed-point constraints are imposed \emph{by construction} through the QR-based planting parametrization described in Section~\ref{subsec:qr_planting}, so that gradient-based optimization cannot drift outside the constrained hypothesis class.

\subsection{Constrained approximation model}
\label{subsec:exp_constrained_model}

Given a target field $F_\star$ on a compact set $K\subset\mathbb R^2$, we minimize the mean-squared
regression loss (see\ref{Details for numerical experiments})
\begin{equation}
\label{eq:mse_loss_exp}
\mathcal L(\theta)
=\frac{1}{N_{\mathrm{train}}}\sum_{i=1}^{N_{\mathrm{train}}}
\big\|F_\theta(x_i)-F_\star(x_i)\big\|_2^2,
\end{equation}
using samples $x_i$ drawn uniformly in $K$.
To check that the constraints are correctly accommodated for, we monitor the fixed-point residual
\begin{equation}
\label{eq:fp_residual_exp}
r_{\mathrm{fp}}(\theta)
:=\max_{l=1,\dots,C}\big\|F_\theta(\bar x^{(l)})\big\|_2,
\end{equation}
which should remain at (numerical) round-off level (since the constraints are imposed by construction). To assess spatial accuracy, we also evaluate the pointwise error on a dense grid,
\begin{equation}
\label{eq:pointwise_error_exp}
e(x):=\big\|F_\theta(x)-F_\star(x)\big\|_2,
\end{equation}
and visualize $e(x)$ as a log-scale heatmap.

\subsection{Experiment 1: generalized Lotka-Volterra vector field with resources competition}
\label{subsec:Lotka-Volterra_experiment}

Let $K=[-1,4]\times[-1,4]\subset\mathbb R^2$ and define the polynomial target vector field
$F_{\mathrm{base}}:\mathbb R^2\to\mathbb R^2$ by
\begin{equation}
\label{eq:F_base_exp}
F_{\mathrm{base}}(x,y)
=
\begin{pmatrix}
x(3-x)-2xy\\[2pt]
y(2-y)-xy
\end{pmatrix}.
\end{equation}
This is a specific realization of the celebrated Lotka-Volterra competition model,  a foundational mathematical framework in ecology. The model is often invoked to predict the outcomes of the inter-species competition under limited resources. It directly supports the principle of competitive exclusion (Gauss’s Law), which states that two species competing for the same resource cannot stably coexist, if all other ecological factors stay constant.\\The specific field defined by Eqs. (\ref{eq:F_base_exp}) admits four equilibria inside $K$, namely $\bar x^{(1)}=(0,0)$, $\bar x^{(2)}=(0,2)$, $\bar x^{(3)}=(3,0)$ and $\bar x^{(4)}=(1,1)$. It is straightforward to verify by direct inspection that $F_{\mathrm{base}}(\bar x^{(l)})=0$, for all $l=1,\dots,4$.
We set $C=4$ and plant exactly these equilibria via \eqref{eq:fp_constraints_exp}.
We sample $N_{\mathrm{train}}=10^6$ points uniformly in $K$ and minimize \eqref{eq:mse_loss_exp}
with Adam \cite{kingma2014adam} (learning rate $10^{-3}$) using mini-batches of size $500$.
At each optimization step we recompute $A_1$ using \eqref{eq:A1_qr_exp}, ensuring the constraints
remain satisfied throughout training.
Figure~\ref{fig:exp_vectorfield_regression} reports on the qualitative comparison between the target phase portrait
and the learned constrained phase portrait, together with the spatial distribution of the error $e(x)$. 

To quantify the
variability due to random initialization, data sampling, and stochastic mini-batch optimization, we repeat the
experiment over $10$ independent random seeds and report mean $\pm$ standard deviation.
All runs were performed using PyTorch on a NVIDIA RTX A5500 with Data Loaders \texttt{num\_workers}=4. The mean wall-clock time per training run was
$3.99 \pm 0.02$ minutes over the $10$ seeds.
Figure~\ref{fig:exp_vectorfield_regression} shows the qualitative comparison between the target phase portrait
and the learned constrained phase portrait, together with the spatial distribution of the error $e(x)$.
The constrained model accurately recovers the global structure of $F_{\mathrm{base}}$ on $K$ while preserving
the prescribed equilibria. Quantitatively, we report a grid MSE of
$7.48{\times}10^{-5} \pm 1.72{\times}10^{-4}$, a grid RMSE of
$6.28{\times}10^{-3} \pm 6.27{\times}10^{-3}$, and a maximum pointwise error of
$1.05{\times}10^{-1} \pm 9.25{\times}10^{-2}$. The fixed-point residual remains small across runs,
with value $2.31{\times}10^{-5} \pm 1.60{\times}10^{-5}$, confirming that the QR-planted parametrization
prevents drift away from the prescribed hard constraints while retaining enough flexibility to fit the target
vector field.

Another experiment of regression of a dynamical system involving a limit cycle behaviour is reported in \ref{subsec:exp_vectorfield_limitcycle}.\\
The code is available on \href{https://github.com/FelicianoGiuseppePacifico00/Exact-Fixed-Point-Constraints-in-Neural-ODEs.git}{Github}.

\begin{figure}[t]
    \centering
    \includegraphics[width=0.98\linewidth]{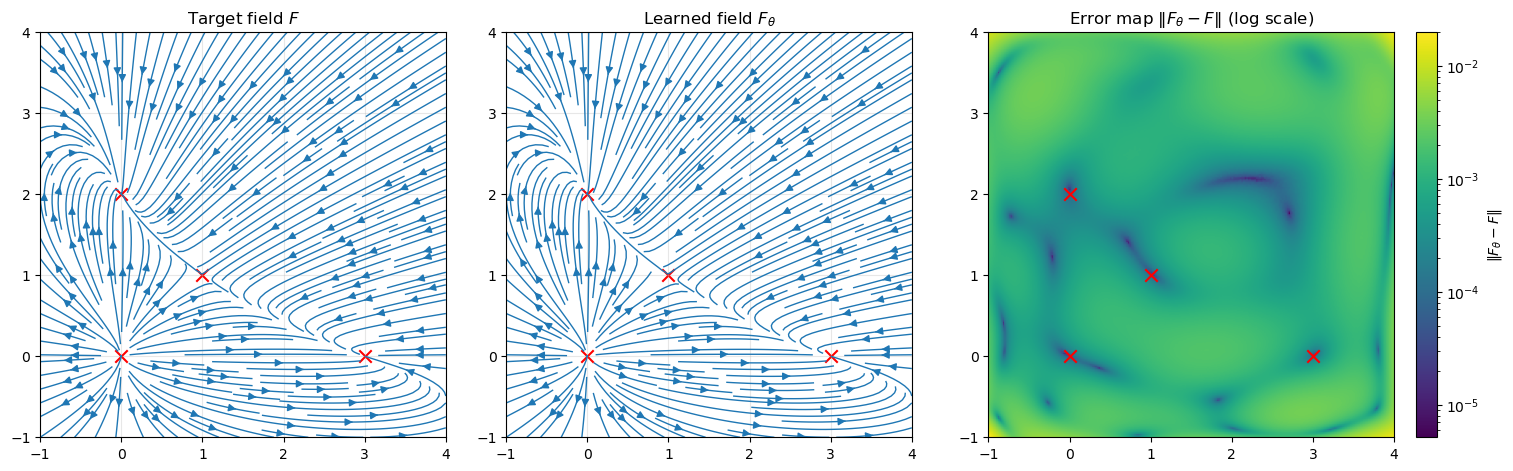}
    \caption{Vector-field regression with four planted fixed points.
    (\emph{Left}) Target field $F_{\mathrm{base}}$ from \eqref{eq:F_base_exp}.
    (\emph{Center}) Learned constrained field $F_\theta$ from \eqref{eq:F_theta_exp}--\eqref{eq:A1_qr_exp}.
    (\emph{Right}) Log-scale error heatmap $e(x)=\|F_\theta(x)-F_{\mathrm{base}}(x)\|_2$.
    Red crosses indicate the prescribed equilibria, which are matched (up to numerical precision) by construction.}
    \label{fig:exp_vectorfield_regression}
\end{figure}

\section{Conclusions}
\label{Conclusions}

In this work we addressed a basic question for constrained Neural-ODE models: can one impose a finite set of exact equilibrium conditions without sacrificing approximation power? The answer is affirmative. Indeed, we proved that, on compact sets, the subclass of models satisfying prescribed fixed-point constraints remains dense in the class of continuous target vector fields that are compatible with those constraints. In this sense, planting equilibria does not reduce expressivity within the admissible hypothesis class.

We then turned this existence result into a practical parametrization for training. For the residual Neural-ODE architecture studied in the paper, the fixed-point conditions can be rewritten as a linear system for the output matrix. This yields a family of exact constrained parametrizations, which we expressed in two equivalent ways: through the Moore-Penrose pseudoinverse and through a thin QR factorization. The QR formulation is especially convenient in practice, because it avoids explicit pseudoinverses. Further, it makes clear how the constrained solution decomposes into a particular component that enforces the prescribed equilibria and a free relic that remains available for learning the suited residual dynamics. As a consequence, gradient-based optimization evolves entirely within the constrained model class, with no penalty terms and no post-training projection.

The numerical experiments support this picture. In the competition-model example, the constrained Neural-ODE accurately reproduces the target vector field while matching all four prescribed equilibria exactly. In the oscillatory glycolysis-inspired example, the same construction preserves the imposed equilibrium and still captures the surrounding phase-portrait geometry, including the regime associated with a stable periodic orbit. Taken together, these tests show that hard equilibrium constraints can be incorporated into Neural-ODE vector-field regression without preventing accurate approximation of qualitatively different dynamics.

We conclude by remarking that the present work focuses only on the exact placement of equilibria. It does not, by itself, enforce their stability or control the local Jacobian structure. Extending the construction so as to constrain not only the fixed points location but also their stability class, basin geometry, or other invariant features is therefore a natural next step. Further relevant directions include a systematic conditioning analysis of the constrained parametrization and other applications to data-driven dynamical systems in which part of the equilibrium structure is known a priori.

\paragraph{Broader impacts.}
This work contributes a constrained parametrization for Neural ODEs that allows prescribed fixed points to be enforced exactly. A positive potential impact is improved reliability and interpretability of learned dynamical systems, especially in scientific and engineering settings where known equilibria should be preserved by construction. Such constraints may reduce physically or mathematically inconsistent predictions and can make learned models easier to validate against prior domain knowledge.

\newpage

\bibliographystyle{unsrtnat}
\bibliography{references}
\newpage

\appendix
\section{On the existence of the inverse of matrix $M$}
\label{M_invertibility}
In the above proof we assumed that it is always possible to choose the weights $W$ and the bias $\beta$ so to have  matrix $M$ invertible. Here we provide a  proof of the  assumption.
\begin{lemma}
\label{lem:invertible_M}
Let $\bar x^{(1)},\dots,\bar x^{(C)}\in\mathbb R^n$ be $C$ distinct points and let
$\sigma:\mathbb R\to\mathbb R$ be continuous with two distinct one-sided limits (This holds, for instance, for $\tanh$ and sigmoid activations.)
\[
\alpha:=\lim_{t\to -\infty}\sigma(t),\qquad \beta:=\lim_{t\to +\infty}\sigma(t),
\qquad \alpha\neq \beta.
\]

Then there exist
parameters $(w_k,\beta_k)_{k=1}^C\subset\mathbb R^n\times\mathbb R$ such that the matrix
\[
M\in\mathbb R^{C\times C},\qquad M_{lk}=\sigma\!\big(w_k^\top \bar x^{(l)}+\beta_k\big),
\]
is invertible.
\end{lemma}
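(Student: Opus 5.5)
Our plan is to reduce to one dimension by projecting, and then use the saturation of $\sigma$ to drive $M$ towards a triangular matrix whose determinant is bounded away from zero.

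\emph{Step 1 (projection).} Because the points $\bar x^{(l)}$ are distinct, the finitely many hyperplanes $\{u:\ u^\top(\bar x^{(l)}-\bar x^{(j)})=0\}$, $l\neq j$, do not cover $\mathbb R^n$. Hence there is a direction $u\in\mathbb R^n$ for which the scalars $t_l:=u^\top\bar x^{(l)}$ are pairwise distinct. After relabelling the points we may assume $t_1<t_2<\dots<t_C$. Choose thresholds $\tau_k$ that separate consecutive values, for instance $\tau_1<t_1$ and $\tau_k\in(t_{k-1},t_k)$ for $k\ge2$. Set $w_k:=\lambda u$ and $\beta_k:=-\lambda\tau_k$, where $\lambda>0$ is a single scale parameter. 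Then
\[
M_{lk}(\lambda)=\sigma\big(\lambda(t_l-\tau_k)\big).
\]

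\emph{Step 2 (limit matrix).} As $\lambda\to\infty$ the sign of $t_l-\tau_k$ decides the limit. If $l\ge k$ then $t_l>\tau_k$, so $M_{lk}\to\beta$. If $l<k$ then $t_l<\tau_k$, so $M_{lk}\to\alpha$. The limit matrix $M_\infty$ therefore has $\beta$ on and below the diagonal and $\alpha$ strictly above it.

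\emph{Step 3 (determinant of the limit).} Write $M_\infty=\alpha J+(\beta-\alpha)L$, where $J$ is the all-ones matrix and $L$ is the lower-triangular all-ones matrix. Subtract row $l-1$ from row $l$ for $l=C,C-1,\dots,2$; this leaves the determinant unchanged. Row $1$ remains $(\beta,\alpha,\dots,\alpha)$. For $l\ge2$, row $l$ becomes $(\beta-\alpha)e_l^\top$, since it has $\beta-\alpha$ in column $l$ and zeros elsewhere. Expanding, we obtain
\[
\det M_\infty=\beta(\beta-\alpha)^{C-1}.
\]
This is nonzero only when $\beta\neq0$. The case $\beta=0$ (which forces $\alpha\neq0$) is the main obstacle, and we handle it by adjusting the thresholds. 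Take $\tau_1\in(t_1,t_2)$ and $\tau_k\in(t_k,t_{k+1})$, with $\tau_C>t_C$. Now $M_{lk}\to\beta$ for $l>k$ and $M_{lk}\to\alpha$ for $l\le k$, which gives an upper-triangular pattern. By the symmetric row reduction, its determinant is $\alpha(\alpha-\beta)^{C-1}\neq0$. Alternatively, one can simply replace $u$ by $-u$ to swap the roles of $\alpha$ and $\beta$. Since $\alpha\neq\beta$, at least one of $\alpha,\beta$ is nonzero, so in every case some choice of thresholds yields a limit matrix with nonzero determinant.

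\emph{Step 4 (continuity).} The determinant is a polynomial in the entries of $M$, and each entry converges as $\lambda\to\infty$. Hence $\det M(\lambda)\to\det M_\infty\neq0$. Choosing $\lambda$ large enough gives $\det M(\lambda)\neq0$, so $M$ is invertible. The relabelling of the points in Step 1 only permutes the rows of $M$, which does not affect invertibility.
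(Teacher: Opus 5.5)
Your proof is correct and follows the paper's argument essentially step for step. Both use a separating projection, a single gain $\lambda$ with thresholds interleaving the ordered projections, convergence to the staircase matrix with determinant $\beta(\beta-\alpha)^{C-1}$, and continuity of the determinant; your explicit treatment of the case $\beta=0$ (thresholds $\tau_k\in(t_k,t_{k+1})$, giving a limit with determinant $\alpha(\alpha-\beta)^{C-1}$) makes concrete what the paper only asserts by ``exchanging the roles of $\alpha$ and $\beta$''.
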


\begin{proof}
\textbf{\\Choose a projection that separates the points: \\}
We want to find a direction $a\in\mathbb R^n$ such that the one-dimensional projections
\[
t_l := a^\top \bar x^{(l)},\qquad l=1,\dots,C,
\]
are all distinct. Notice that two projections collide, i.e. $t_l=t_m$, if and only if
\[
a^\top \bar x^{(l)} = a^\top \bar x^{(m)}
\quad \Longleftrightarrow \quad
a^\top(\bar x^{(l)}-\bar x^{(m)})=0.
\]
Fix a pair $l\neq m$ and denote $d_{lm}:=\bar x^{(l)}-\bar x^{(m)}\neq 0$ (since the points are distinct).
The set of directions $a$ for which the projections of $\bar x^{(l)}$ and $\bar x^{(m)}$ coincide is therefore
\[
H_{lm}:=\{a\in\mathbb R^n:\ a^\top d_{lm}=0\}.
\]
Geometrically, $H_{lm}$ is the set of all vectors $a$ orthogonal to $d_{lm}$, hence it is an $(n-1)$-dimensional
hyperplane (a codimension-$1$ linear subspace) in $\mathbb R^n$.

The number of pairs $(l,m)$ is finite and total in $\binom{C}{2}$. Hence, the number of unsuitable hyperplanes (those that yield coinciding projections) is finite.

A finite union of hyperplanes cannot cover the whole space $\mathbb R^n$ (hyperplanes have codimension $1$ and thus
cannot fill an open set). Consequently, we can choose a direction
\[
a \in \mathbb R^n \setminus \bigcup_{l<m} H_{lm},
\]
which implies $a^\top(\bar x^{(l)}-\bar x^{(m)})\neq 0$ for all $l\neq m$, i.e. $t_l\neq t_m$ for all $l\neq m$.
\\Define $t_l:=a^\top \bar x^{(l)}$ and relabel the points so that
\[
t_1<t_2<\cdots<t_C.
\]

\textbf{Define a ``staircase'' target matrix:\\} Using the constants $\alpha,\beta$ introduced in the statement of the lemma, we define the staircase  $C\times C$ matrix $M_\star$ by
\begin{equation}
\label{eq:staircase_matrix}
(M_\star)_{lk}:=
\begin{cases}
\alpha, & l<k,\\[2pt]
\beta,  & l\ge k.
\end{cases}
\end{equation}
Written out, $M_\star$ has the ``staircase'' form depicted in the following
\[
M_\star=
\begin{pmatrix}
\beta & \alpha & \alpha & \cdots & \alpha\\
\beta & \beta  & \alpha & \cdots & \alpha\\
\beta & \beta  & \beta  & \cdots & \alpha\\
\vdots& \vdots & \vdots & \ddots & \vdots\\
\beta & \beta  & \beta  & \cdots & \beta
\end{pmatrix}.
\]

\textbf{Compute $\det(M_\star)$:\\}
Perform the following determinant-preserving row operations: for $i=1,\dots,C-1$,
replace row $(i+1)$ by row $(i+1)$ minus row $i$.
Denote the resulting matrix by $\widetilde M_\star$.
Let us inspect the new rows.

For $i=1$, row $2$ minus row $1$ gives
\[
(\beta-\beta,\ \beta-\alpha,\ \alpha-\alpha,\ \dots,\ \alpha-\alpha)
=
(0,\ \beta-\alpha,\ 0,\dots,0).
\]
For $i=2$, row $3$ minus row $2$ gives
\[
(\beta-\beta,\ \beta-\beta,\ \beta-\alpha,\ \alpha-\alpha,\ \dots,\ \alpha-\alpha)
=
(0,\ 0,\ \beta-\alpha,\ 0,\dots,0),
\]
and so on. In general, for $i=1,\dots,C-1$ the row difference $(i+1)-i$ produces a row
which is zero everywhere except at column $(i+1)$, where it equals $\beta-\alpha$.
Therefore $\widetilde M_\star$ is lower triangular and has diagonal entries
\[
\widetilde M_{\star,11}=\beta,\qquad
\widetilde M_{\star,22}=\beta-\alpha,\qquad
\widetilde M_{\star,33}=\beta-\alpha,\ \dots,\ 
\widetilde M_{\star,CC}=\beta-\alpha.
\]
\[
\widetilde M_\star=
\begin{pmatrix}
\beta & \alpha & \alpha  & \cdots & \alpha\\
0     & \beta-\alpha & 0  & \cdots & 0\\
0     & 0     & \beta-\alpha  & \cdots & 0\\
\vdots& \vdots& \vdots& \ddots  & \vdots\\
0     & 0     & 0     & \cdots  & \beta-\alpha
\end{pmatrix}.
\]

Hence
\begin{equation}
\label{eq:det_staircase}
\det(M_\star)=\det(\widetilde M_\star)=\beta\,(\beta-\alpha)^{C-1}.
\end{equation}
Since $\alpha\neq\beta$ (by sigmoidal activation function assumption), we have $\beta-\alpha\neq 0$; Thus $\det(M_\star)\neq 0$, so $M_\star$ is invertible. If $\beta\neq 0$, then
\[
\det(M^\star)=\beta(\beta-\alpha)^{C-1}\neq 0.
\]
If instead $\beta=0$, then necessarily $\alpha\neq 0$ since $\alpha\neq\beta$. In this case, one simply exchanges the roles of $\alpha$ and $\beta$ in the construction of $M^\star$; the same row-operation argument then gives
\[
\det(M^\star)=\alpha(\alpha-\beta)^{C-1}\neq 0.
\]
Therefore, in both cases, the matrix $M^\star$ is invertible.

\textbf{Make $M$ arbitrarily close to $M_\star$ with suitable neurons:\\}
Choose thresholds $\theta_k$ such that
\[
t_{k-1}<\theta_k<t_k,\qquad k=1,\dots,C,
\]
with the convention $t_0:=-\infty$ so that $\theta_1<t_1$.
For a gain parameter $\lambda>0$, define
\[
w_k:=\lambda a,\qquad \beta_k:=-\lambda\theta_k,\qquad
\phi_k(x):=\sigma(w_k^\top x+\beta_k)=\sigma\big(\lambda(a^\top x-\theta_k)\big).
\]
Then for each fixed point $\bar x^{(l)}$ we have
\[
\phi_k(\bar x^{(l)})=\sigma\big(\lambda(t_l-\theta_k)\big).
\]
By construction, if $l<k$ then $t_l<\theta_k$ so $t_l-\theta_k<0$; if $l\ge k$ then
$t_l>\theta_k$ so $t_l-\theta_k>0$. Hence, as $\lambda$ increases, the arguments
$\lambda(t_l-\theta_k)$ become large negative for $l<k$ and large positive for $l\ge k$.
Because $\sigma$ is continuous and non-constant, and since $\alpha$ and $\beta$ are distinct values of the image of the non linear function $\sigma$, one can set $\lambda$ large enough so that
\[
\big|\sigma(\lambda(t_l-\theta_k))-\alpha\big|<\eta \quad \text{for } l<k,
\qquad
\big|\sigma(\lambda(t_l-\theta_k))-\beta\big|<\eta \quad \text{for } l\ge k,
\]
for any prescribed $\eta>0$. Equivalently, $
M(\lambda)_{lk}:=\phi_k(\bar x^{(l)})=\sigma\big(\lambda(t_l-\theta_k)\big)
$, yields $M(\lambda)\to M_\star$ entrywise as $\lambda\to\infty$.

\textbf{Conclude on the invertibility of $M$ by continuity of the determinant:\\}
The determinant is a continuous function of the matrix entries. Since
$M(\lambda)\to M_\star$ and $\det(M_\star)\neq 0$ by \eqref{eq:det_staircase}, there exists
$\lambda_0$ such that for all $\lambda\ge \lambda_0$,
\[
\det(M(\lambda))\neq 0.
\]
In particular, by choosing $\lambda\ge \lambda_0$ yields an invertible evaluation matrix $M$.
This completes the proof.
\end{proof}

\section{Details for numerical experiments}
\label{Details for numerical experiments}
Throughout this section we use the residual one-hidden-layer parametrization introduced in
Section~\ref{sec:planting_node}, as the Neural-ODE regressor. In formulae:
\begin{equation}
\label{eq:F_theta_exp}
F_\theta(x)= -x + A_1 f(A_2x+b_2)+b_1,
\end{equation}
where $x\in\mathbb R^2$, $A_2\in\mathbb R^{m\times 2}$, $b_2\in\mathbb R^m$,
$A_1\in\mathbb R^{2\times m}$, $b_1\in\mathbb R^2$, and $f$ is applied componentwise
(sigmoid in the experiments below). We fix the hidden width to $m=256$.

Given the prescribed equilibria $\{\bar x^{(l)}\}_{l=1}^C\subset K$ we enforce $F_\theta(\bar x^{(l)})=0$.

As shown in Section~\ref{sec:planting_node}, these constraints are equivalent to solving the linear system
$A_1S=Y$, where the feature matrix $S\in\mathbb R^{m\times C}$ and the target matrix $Y\in\mathbb R^{2\times C}$
are defined by
\[
s_l := f(A_2\bar x^{(l)}+b_2),\qquad
S := [s_1,\dots,s_C],\qquad
Y := [\bar x^{(1)}-b_1,\dots,\bar x^{(C)}-b_1].
\]

Assuming $\mathrm{rank}(S)=C$ (generically satisfied when $m\gg C$), we compute a thin QR factorization
\[
S = QR,
\]
with $Q\in\mathbb R^{m\times C}$ having orthonormal columns and $R\in\mathbb R^{C\times C}$ upper triangular
and invertible. We then set
\begin{equation}
\label{eq:A1_qr_exp}
A_1 \;=\; (YR^{-1}Q^\top) \;+\; W(I-QQ^\top),
\end{equation}
where $W\in\mathbb R^{2\times m}$ is a free matrix trained by gradient descent
(together with $A_2,b_2,b_1$). In practice, $R^{-1}$ is never formed explicitly:
the factor $YR^{-1}$ is obtained by solving the triangular system $BR=Y$ for $B$.

By construction, \eqref{eq:A1_qr_exp} implies $A_1S=Y$ and therefore
\eqref{eq:fp_constraints_exp} holds for \emph{all} values of the free parameters.
This is the key practical feature of the planting mechanism: the optimization is always constrained
to the subset of all possible vector fields that match the prescribed equilibria exactly, with no penalty terms and no post-hoc projection steps.

\section{Experiment 2: limit-cycle system with a planted equilibrium}
\label{subsec:exp_vectorfield_limitcycle}

We now test the same constrained parametrization on a qualitatively different phase portrait: a system with a single prescribed (unstable) fixed point and a stable periodic orbit. This experiment is meant to probe whether hard equilibrium constraints interfere with learning
\emph{non-equilibrium} fixed-points (here, a limit cycle). In particular, the planted constraint fixes the behavior
at one point in state space, but leaves the remaining degrees of freedom available to approximate the global
geometry of the flow, including periodic behavior.


Let $K=[-1,4]\times[-1,4]\subset\mathbb R^2$ and consider the (Brusselator-like) target vector field
$F_{\mathrm{lc}}:\mathbb R^2\to\mathbb R^2$ defined by
\begin{equation}
\label{eq:F_lc_base_exp}
F_{\mathrm{lc}}(x,y)
=
\begin{pmatrix}
-x + a\,y + x^2y\\[2pt]
b - a\,y - x^2y
\end{pmatrix},
\qquad (a>0,\ b>0).
\end{equation}
In the experiment we fix
\[
a=0.06,\qquad b=0.6.
\]

Specifically the above system has been invoked as a minimal descriptive model of the glycolysis, a fundamental biochemical process by which living cells extract energy from sugar. Here $x$ and $y$ are the concentration of the ADP (adenosine diphosphate) and F6P (fructose-6-phosphate), respectively. A direct computation shows that $F_{\mathrm{lc}}$ has a unique equilibrium point
\begin{equation}
\label{eq:fixed_points_lc_exp}
\bar x^{(1)} = \Big(b,\ \frac{b}{a+b^2}\Big),
\end{equation}
and $F_{\mathrm{lc}}(\bar x^{(1)})=0$. For the operated choice of parameters, the above fixed point is an unstable spiral. For this parameters' regime, trajectories from a broad set of initial conditions in $K$  convergence spiraling toward a stable periodic orbit, yielding a limit-cycle phase portrait.


We approximate $F_{\mathrm{lc}}$ on $K$ with the same constrained Neural-ODE model \eqref{eq:F_theta_exp}
(with $m=256$). Here we set $C=1$ and plant the equilibrium \eqref{eq:fixed_points_lc_exp} by enforcing
\begin{equation}
\label{eq:fp_constraints_lc_exp}
F_\theta(\bar x^{(1)})=0,
\end{equation}
using the QR-based construction \eqref{eq:A1_qr_exp} (the case $C=1$ reduces to a normalized one-dimensional
projector, but the implementation is identical).

We sample $N_{\mathrm{train}}=10^6$ points uniformly in $K$ and minimize the regression loss
\begin{equation}
\label{eq:mse_loss_lc_exp}
\mathcal L(\theta)
=\frac{1}{N_{\mathrm{train}}}\sum_{i=1}^{N_{\mathrm{train}}}
\big\|F_\theta(x_i)-F_{\mathrm{lc}}(x_i)\big\|_2^2,
\end{equation}
training with Adam (learning rate $10^{-3}$) and mini-batches of size $500$ for $200$ epochs.
We monitor the planted-equilibrium residual 
\begin{equation}
\label{eq:fp_residual_lc_exp}
r_{\mathrm{fp}}(\theta)
:=\big\|F_\theta(\bar x^{(1)})\big\|_2,
\end{equation}
and visualize the log-scale pointwise error
\begin{equation}
\label{eq:pointwise_error_lc_exp}
e(x):=\big\|F_\theta(x)-F_{\mathrm{lc}}(x)\big\|_2.
\end{equation}


Figure~\ref{fig:exp_vectorfield_limitcycle} compares the target flow, the learned constrained approximation,
and the resulting error heatmap. The learned field reproduces the qualitative geometry of the limit-cycle phase
portrait on $K$, while the equilibrium constraint at $\bar x^{(1)}$ is preserved throughout training by construction
(red cross). This confirms that imposing hard fixed-point constraints does not preclude approximating vector fields
whose dominant fixed-point is a periodic orbit: the constraint fixes the flow at the planted equilibrium, but leaves
enough expressivity to fit the surrounding dynamics.

\begin{figure}[t]
    \centering
    \includegraphics[width=0.98\linewidth]{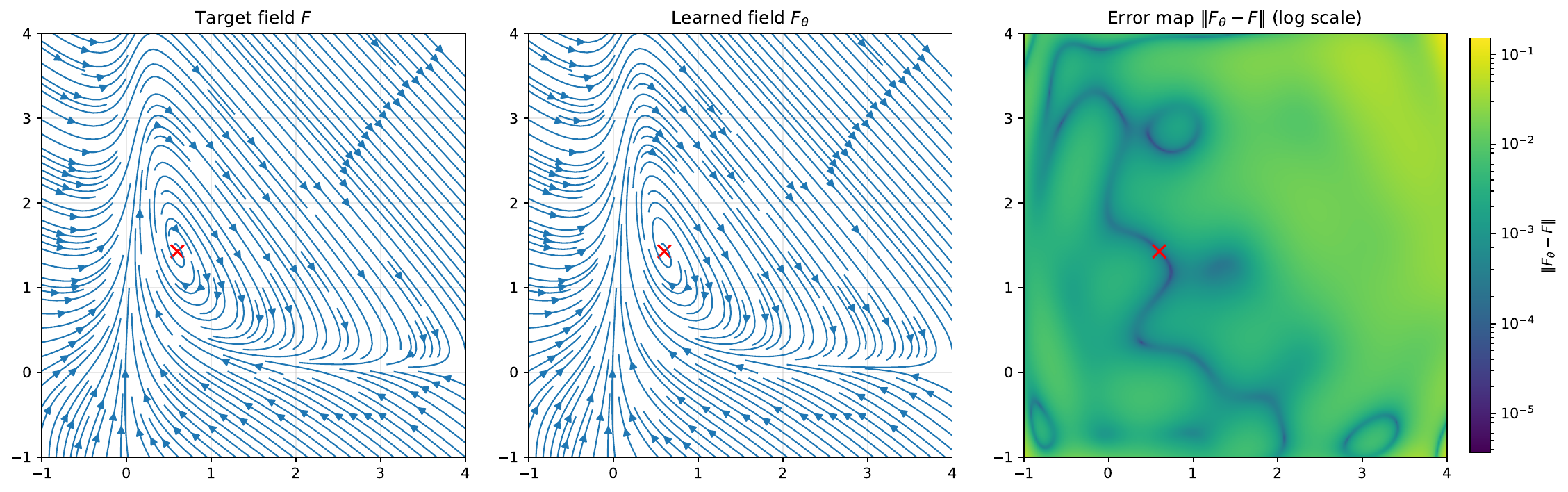}
    \caption{Vector-field regression for a limit-cycle system with a planted equilibrium.
    (\emph{Left}) Target field $F_{\mathrm{lc}}$ from \eqref{eq:F_lc_base_exp}.
    (\emph{Center}) Learned constrained field $F_\theta$.
    (\emph{Right}) Log-scale error heatmap $e(x)=\|F_\theta(x)-F_{\mathrm{lc}}(x)\|_2$.
    The red cross marks the prescribed equilibrium \eqref{eq:fixed_points_lc_exp}, which is matched (up to numerical precision) by construction.}
    \label{fig:exp_vectorfield_limitcycle}
\end{figure}


\end{document}